\newtheorem{thm}{Theorem}
\newtheorem{lem}[thm]{Lemma}
\theoremstyle{definition}
\begin{document}

\title[Combinatorial solutions to the discrete and ultradiscrete Toda molecules]{Combinatorial expressions of the solutions to initial value problems of the discrete and ultradiscrete Toda molecules}

\author{Shuhei Kamioka and Tomoaki Takagaki}

\address{Department of Applied Mathematics and Physics, Graduate School of Informatics, Kyoto University,
  Kyoto, Japan, 606-8501}
\begin{abstract}
  Combinatorial expressions are presented
  to the solutions to initial value problems of the discrete and ultradiscrete Toda molecules.
  For the discrete Toda molecule, a subtraction-free expression of the solution is derived
  in terms of non-intersecting paths,
  for which two results in combinatorics,
  Flajolet's interpretation of continued fractions and Gessel--Viennot's lemma on determinants, are applied.
  By ultradiscretizing the subtraction-free expression,
  the solution to the ultradiscrete Toda molecule is obtained.
  It is finally shown that the initial value problem of the ultradiscrete Toda molecule is exactly solved
  in terms of shortest paths on a specific graph.
\end{abstract}

\pacs{02.30.Ik, 04.20.Ex, 05.45.Yv}

\section{Introduction}

The Toda molecule \cite{Hirota(1988TM)} is a semi-infinite version of the Toda lattice \cite{Toda(1967)}.
As is the Toda lattice, the Toda molecule is known as a typical example of integrable systems
which has, despite its nonlinearity, an exact solution in terms of Hankel determinants.
The {\em discrete Toda molecule} is a discrete analogue of the Toda molecule
which is derived in \cite{Hirota-Tsujimoto-Imai(1993RIMS)} by using the bilinear formalism.
The discrete Toda molecule is a discrete integrable system which possesses a Hankel determinant solution
analogous to the Toda molecule.

The discrete Toda molecule is also derived by using the Lax formalism
\cite{Papageorgiou-Grammaticos-Ramani(1995),Maeda-Tsujimoto(2013pre)},
in which a connection with orthogonal polynomials is exploited to deduce the time evolution equations of
the discrete Toda molecule
\begin{subequations} \label{eq:dToda}
  \begin{eqnarray}
    q^{(t+1)}_{n} + e^{(t+1)}_{n} = q^{(t)}_{n} + e^{(t)}_{n+1}, \\
    q^{(t+1)}_{n} e^{(t+1)}_{n+1} = q^{(t)}_{n+1} e^{(t)}_{n+1}, \\
    e^{(t)}_{0} = 0 \qquad  \mbox{for $t \in \mathbb{Z}$, $n \in \mathbb{N}_{0}$.}
  \end{eqnarray}
\end{subequations}
In numerical algorithms, the equations \eref{eq:dToda} are known as recurrence equations of
the {\em qd algorithm}, which is used for computing Pad\'e approximants of analytic functions
(see, e.g., \cite{Baker-GravesMorris(1996)}),
and for computing eigenvalues of tridiagonal matrices (see, e.g., \cite{Rutishauser(1990)}).
In the study of pure combinatorics, Viennot \cite{Viennot(2000)} applied the qd algorithm to
a combinatorial problem of enumerating configurations of non-intersecting paths.
From the viewpoint of dynamical systems, Viennot's combinatorial result is observed as solving
an initial value problem of the discrete Toda molecule \eref{eq:dToda} from particular initial value.

In recent progress of investigating integrable systems, much attention has been given to
ultradiscrete integrable systems,
especially since the discovery of a direct connection between
discrete integrable systems and soliton cellular automata \cite{Tokihiro-Takahashi-Matsukidaira-Satsuma(1996)}.
A general method to derive an ultradiscrete integrable system from a discrete integrable system is called
{\em ultradiscretization}.
For the discrete Toda molecule \eref{eq:dToda}, {\em ultradiscretization} is performed as follows
\cite{Nagai-Takahashi-Tokihiro(1999)}:
Introduce new dependent variables $Q^{(t)}_{n}$, $E^{(t)}_{n}$
by $q^{(t)}_{n} = \exp(-Q^{(t)}_{n}/\varepsilon)$, $e^{(t)}_{n} = \exp(-E^{(t)}_{n}/\varepsilon)$
with a parameter $\varepsilon > 0$, and take the limit $\varepsilon \to 0$.
The equations \eref{eq:dToda} then tend to the {\em ultradiscrete Toda molecule}
\begin{subequations} \label{eq:uToda}
  \begin{eqnarray}
    Q^{(t+1)}_{n} = {}
    \min{\left\{ \sum_{k=0}^{n} Q^{(t)}_{k} - \sum_{k=0}^{n-1} Q^{(t+1)}_{k}, E^{(t)}_{n+1} \right\}}, \\
    E^{(t+1)}_{n+1} = Q^{(t)}_{n+1} - Q^{(t+1)}_{n} + E^{(t)}_{n+1}, \qquad
    \mbox{for $t \in \mathbb{Z}$, $n \in \mathbb{N}_{0}$.}
  \end{eqnarray}
\end{subequations}
It is shown in \cite{Nagai-Takahashi-Tokihiro(1999)} that
the ultradiscrete Toda molecule \eref{eq:uToda} describes the dynamics of a box-ball system.

In this paper, we examine the initial value problems of the discrete and ultradiscrete Toda molecules,
\eref{eq:dToda} and \eref{eq:uToda}, for which the initial value is given at $t=0$.
Obviously, one can exactly solve the problems in the following sense:
At any time $t \in \mathbb{N}_{0}$ and any site $n \in \mathbb{N}_{0}$,
the exact value of each dependent variable can be calculated from the initial value
in finitely many arithmetic and minimizing operations.
However, it is still nontrivial how to formulate the solutions since the equations are nonlinear.
The aim of this paper is to derive an exact expression of the solutions to the initial value problems
purely in terms of the initial value.
In order to formulate the solutions, we will utilize combinatorial objects,
non-intersecting paths and shortest paths on a graph,
in view of the combinatorial results on paths:
Flajolet's interpretation of continued fractions \cite{Flajolet(1980)} and
Gessel--Viennot's lemma \cite{Gessel-Viennot(1985),Aigner(2001LNCS)} on determinants.

This paper is organized as follows.
In \sref{sec:yudhisye}, we review a determinant solution to the discrete Toda molecule,
based on which, in \sref{sec:e6ucbvdfu}, we combinatorially formulate an exact expression of
the solution to the initial value problem of the discrete Toda molecule in terms of non-intersecting paths.
In \sref{sec:hdguiwefh7}, we derive the solution to the initial value problem of
the ultradiscrete Toda molecule by ultradiscretizing the solution to the discrete Toda molecule
obtained in \sref{sec:e6ucbvdfu}.
Further combinatorial observations lead us to a simpler expression of the solution in terms of shortest paths
on a specific graph.
\Sref{sec:ckdoc89p} is devoted to concluding remarks.

\section{Determinant solution to the discrete Toda molecule}
\label{sec:yudhisye}

In \sref{sec:yudhisye}, we give a brief review on a determinant solution to the discrete Toda lattice
together with bilinear equations associated with the discrete and ultradiscrete Toda molecules.
See, e.g., \cite{NakamuraY(1998),Nagai-Takahashi-Tokihiro(1999)} for detailed explanations.
Based on the determinant solution, in the subsequent sections,
we will examine initial value problems of the discrete and ultradiscrete Toda molecules.

We introduce a tau function $\tau^{(t)}_{n}$ of the discrete Toda molecule \eref{eq:dToda}
by the variable transformation
\begin{equation} \label{eq:hbgydjsuck}
  q^{(t)}_{n}   = \frac{\tau^{(t+1)}_{n+1} \tau^{(t)}_{n}}{\tau^{(t+1)}_{n} \tau^{(t)}_{n+1}}, \qquad
  e^{(t)}_{n+1} = \frac{\tau^{(t+1)}_{n} \tau^{(t)}_{n+2}}{\tau^{(t+1)}_{n+1} \tau^{(t)}_{n+1}}, \qquad
  n \in \mathbb{N}_{0},
\end{equation}
for which we assume the boundary condition that $\tau^{(t)}_{0} = 1$.
We then obtain from \eref{eq:dToda} a bilinear equation of the discrete Toda molecule,
\begin{equation} \label{eq:suhifefvs}
  \tau^{(t+1)}_{n+1} \tau^{(t-1)}_{n+1} = \tau^{(t+1)}_{n} \tau^{(t-1)}_{n+2} + \tau^{(t)}_{n+1} \tau^{(t)}_{n+1},
  \qquad n \in \mathbb{N}_{0}.
\end{equation}
\begin{subequations} \label{eq:hcudisgdv}
  To the bilinear equation \eref{eq:suhifefvs}, we have an exact solution in the Hankel determinant of size $n$
  \begin{equation} \label{eq:apo0g4wg}
    \tau^{(t)}_{n} = \det(f^{(t)}_{j+k})_{j,k=0}^{n-1} = {}
    {\left| \begin{array}{cccc}
        f^{(t)}_{0}   & f^{(t)}_{1} & \cdots & f^{(t)}_{n-1}  \\[1\jot]
        f^{(t)}_{1}   & f^{(t)}_{2} & \cdots & f^{(t)}_{n}    \\[1\jot]
        \vdots        & \vdots      & \ddots & \vdots         \\[1\jot]
        f^{(t)}_{n-1} & f^{(t)}_{n} & \cdots & f^{(t)}_{2n-2} \\[1\jot]
      \end{array} \right|}
  \end{equation}
  where $f^{(t)}_{n}$ is an arbitrary function subject to the linear {\em dispersion relation}
  \begin{equation} \label{eq:dsjuvdsvds}
    f^{(t+1)}_{n} = f^{(t)}_{n+1}, \qquad n \in \mathbb{N}_{0}.
  \end{equation}
  (The determinant of size zero is assume to be unity conventionally.)
\end{subequations}
We can verify the solution \eref{eq:hcudisgdv} by means of Sylvester's determinant identity.
Substituting the determinant \eref{eq:apo0g4wg} to \eref{eq:hbgydjsuck},
we obtain an exact solution to the discrete Toda molecule \eref{eq:dToda}.

We can derive a bilinear equation of the ultradiscrete Toda molecule \eref{eq:uToda}
by ultradiscretizing the procedure for the discrete Toda molecule:
Introducing a tau function $T^{(t)}_{n}$ by
\begin{subequations} \label{eq:oje8wxas}
  \begin{eqnarray}
    Q^{(t)}_{n} = T^{(t+1)}_{n+1} + T^{(t)}_{n} - T^{(t+1)}_{n} - T^{(t)}_{n+1}, \\
    E^{(t)}_{n+1} = T^{(t+1)}_{n} + T^{(t)}_{n+2} - T^{(t+1)}_{n+1} - T^{(t)}_{n+1}, \qquad
    n \in \mathbb{N}_{0},
  \end{eqnarray}
\end{subequations}
with $T^{(t)}_{0} = 0$,
we then obtain from \eref{eq:uToda} a bilinear equation of the ultradiscrete Toda molecule,
\begin{equation} \label{eq:ajekuygw}
  T^{(t+1)}_{n+1} + T^{(t-1)}_{n+1} = \min{\{ T^{(t+1)}_{n} + T^{(t-1)}_{n+2}, 2 T^{(t)}_{n+1} \}}, \qquad
  n \in \mathbb{N}_{0}.
\end{equation}

The equations \eref{eq:oje8wxas}, \eref{eq:ajekuygw} for the ultradiscrete Toda molecule are obtained
by ultradiscretizing the corresponding \eref{eq:hbgydjsuck}, \eref{eq:suhifefvs} for the discrete Toda molecule:
Assume that $q^{(t)}_{n} = \exp(-Q^{(t)}_{n}/\varepsilon)$, $e^{(t)}_{n} = \exp(-E^{(t)}_{n}/\varepsilon)$,
$\tau^{(t)}_{n} = \exp(-T^{(t)}_{n}/\varepsilon)$ with $\varepsilon > 0$ and take the limit $\varepsilon \to 0$.
Then, the equations \eref{eq:hbgydjsuck}, \eref{eq:suhifefvs} tend to \eref{eq:oje8wxas}, \eref{eq:ajekuygw},
respectively.
The limiting procedure in ultradiscretization replaces the operations $\times$, $/$, $+$ with $+$, $-$, $\min$,
respectively, for we have
\begin{subequations}
  \begin{eqnarray}
    -\varepsilon \log(\rme^{-X/\varepsilon} \times \rme^{-Y/\varepsilon}) = X+Y, \\
    -\varepsilon \log(\rme^{-X/\varepsilon} / \rme^{-Y/\varepsilon}) = X-Y, \\
    -\varepsilon \log(\rme^{-X/\varepsilon} + \rme^{-Y/\varepsilon}) \to \min{\{ X,Y \}} \qquad
    \mbox{as $\varepsilon \to 0$.}
  \end{eqnarray}
\end{subequations}
However, the counterpart of subtraction, $-$, does not exist since the limit of
$\varepsilon \log(\rme^{-X/\varepsilon}-\rme^{-Y/\varepsilon})$ as $\varepsilon \to 0$ is undetermined.
Commonly, we cannot ultradiscretize equations containing subtractions.
Especially, we cannot directly ultradiscretize the determinant solution \eref{eq:hcudisgdv}
because we may encounter subtractions in expanding the determinant.
In \sref{sec:e6ucbvdfu}, we derive a {\em subtraction-free} expression of the determinant solution
\eref{eq:hcudisgdv} to which ultradiscretization is directly applicable.

\section{Initial value problem of the discrete Toda molecule}
\label{sec:e6ucbvdfu}

As an initial value problem of the discrete Toda molecule, we consider the following:

{\em
  For the discrete Toda molecule (\ref{eq:dToda}), let us write the initial value at $t=0$
  \begin{equation} \label{eq:jbhuy7y8c}
    q^{(0)}_{n} = a_{2n}, \qquad
    e^{(0)}_{n+1} = a_{2n+1}, \qquad
    n \in \mathbb{N}_{0}.
  \end{equation}
  Then, for every $t \in \mathbb{N}_{0}$,
  find the exact value of $q^{(t)}_{n}$, $e^{(t)}_{n+1}$, $n \in \mathbb{N}_{0}$,
  uniquely determined from (\ref{eq:dToda}) in terms of the initial value $a_n$.}

To the qd algorithm for Pad\'e approximants whose recurrence equations are given by (\ref{eq:dToda}),
a combinatorial interpretation was given by Viennot \cite{Viennot(2000)},
in which a combinatorial expression of the determinant \eref{eq:apo0g4wg}
is formulated in terms of non-intersecting paths.
The fundamental idea used in \sref{sec:e6ucbvdfu} comes from Viennot's approach to the qd algorithm.

The discrete Toda molecule \eref{eq:dToda} is linearized in the following sense:
The nonlinear system \eref{eq:dToda} for $q^{(t)}_{n}$, $e^{(t)}_{n}$ reduces
into the linear system \eref{eq:dsjuvdsvds} for $f^{(t)}_{n}$
through the dependent variable transformations \eref{eq:hbgydjsuck} and \eref{eq:apo0g4wg}
via the tau function $\tau^{(t)}_{n}$.
We can thus evaluate the time evolution of the discrete Toda molecule by the dispersion relation \eref{eq:dsjuvdsvds}
whose initial value problem is exactly solved by
\begin{equation} \label{eq:hcuwebvd}
  f^{(t)}_{n} = f^{(0)}_{t+n}, \qquad t,n \in \mathbb{N}_{0}.
\end{equation}

The initial value problem therefore amounts to the following two subproblems:
\begin{enumerate}[(i)]
\item \label{itm:idadsvsd}
  Find the initial value $f^{(0)}_{n}$ of $f^{(t)}_{n}$ at $t=0$ in terms of $a_n$
  from \eref{eq:jbhuy7y8c}, \eref{eq:hbgydjsuck} and \eref{eq:hcudisgdv}.
\item \label{itm:abksuhv;}
  Find the value of the determinant \eref{eq:apo0g4wg} to evaluate the tau function $\tau^{(t)}_{n}$
  for each $t \in \mathbb{N}_{0}$.
\end{enumerate}

In order to solve the subproblem (\ref{itm:idadsvsd}),
we utilize Flajolet's combinatorial interpretation of continued fractions \cite{Flajolet(1980)}:
Let us consider a {\em path} $P$ in $\mathbb{Z}^{2}$
consisting of {\em up steps} $U = (1,1)$ and {\em down steps} $D = (1,-1)$.
We say that $P$ is {\em positive} if $P$ never goes beneath the $x$-axis, $y=0$.
(A positive path can touch the $x$-axis.)
We say that $P$ is {\em grounded} if both the two ends, the initial and terminal points, of $P$ are on the $x$-axis.
\Fref{fig:mvfhudsci} shows an example of a positive grounded path $P$.
\begin{figure}
  \centering
  \includegraphics{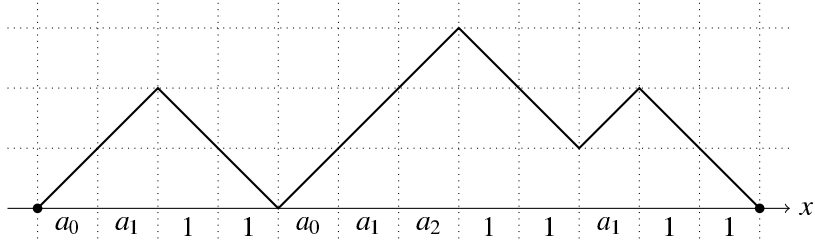}
  \caption{
    A positive grounded path $P$ of steps $UUDDUUUDDUDD$.
    The label of each step, $a_n$ or $1$, is shown below the step.
    The path $P$ weighs $w(P) = a_0^2 a_1^3 a_2$.
  }
  \label{fig:mvfhudsci}
\end{figure}
We label each step in $P$ by $a_n$ if the step is an up step ascending from the line $y=n$
and by unity if a down step.
We then define the weight $w(P)$ of $P$ by the product of the labels of all the steps in $P$.
For example, the path $P$ in \fref{fig:mvfhudsci} weighs $w(P) = a_0^2 a_1^3 a_2$.
We conventionally assume the weight of empty paths of no steps to be unity.
We refer by $D(P)$ to the number of down steps in $P$.

\begin{lem}[Flajolet \cite{Flajolet(1980)}] \label{lem:dtw67cuh}
  It holds that
  \begin{equation}
    \sum_{P} w(P) z^{D(P)} = {}
    \frac{1}{\displaystyle 1 - {}
      \frac{a_0 z}{\displaystyle 1 - {}
        \frac{a_1 z}{\displaystyle 1 - {}
          \frac{a_2 z}{\displaystyle 1 - \cdots}}}}
  \end{equation}
  where the (formal) sum in the left-hand side is taken over all the positive grounded paths $P$
  whose initial point is fixed at $(0,0)$.
\end{lem}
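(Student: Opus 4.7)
The plan is to carry out a first-return decomposition on positive grounded paths, level by level. For each integer $k \ge 0$, let
\begin{equation*}
  F_{k}(z) = \sum_{P} w(P)\, z^{D(P)},
\end{equation*}
where the sum runs over all paths $P$ that start at $(0,k)$, end on the line $y=k$, and never descend below $y=k$; the weights are defined exactly as in the lemma, so an up step from height $n$ contributes $a_{n}$. The lemma is then the assertion about $F_{0}(z)$.

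Now consider a nonempty path $P$ counted by $F_{k}(z)$. Its first step is necessarily an up step from height $k$ to height $k+1$ (contributing $a_{k}$ to the weight and no power of $z$), and because $P$ returns to height $k$ there is a well-defined \emph{first} time after this initial up step that $P$ touches $y = k$; the corresponding down step contributes weight $1$ and a factor $z$. The portion of $P$ strictly between the initial up step and this first-return down step is a path that stays weakly above $y=k+1$ and has both endpoints on $y=k+1$, and thus is counted by $F_{k+1}(z)$. The remainder of $P$ after the first-return is again a path counted by $F_{k}(z)$. This decomposition is clearly a bijection, so, together with the empty path, it yields
\begin{equation*}
  F_{k}(z) = 1 + a_{k} z \, F_{k+1}(z) \, F_{k}(z),
  \qquad \text{equivalently} \qquad
  F_{k}(z) = \frac{1}{1 - a_{k} z \, F_{k+1}(z)}.
\end{equation*}

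Iterating this identity for $k = 0, 1, 2, \ldots$ produces, after $N$ unfoldings, the depth-$N$ truncation of the continued fraction on the right-hand side of the lemma. To conclude, one works in the ring $\mathbb{Z}[a_{0}, a_{1}, \dots]\llbracket z \rrbracket$ of formal power series: any positive grounded path contributing to the coefficient of $z^{n}$ in $F_{0}(z)$ has at most $n$ down steps, hence reaches height at most $n$, so that coefficient is already determined once the continued fraction is truncated at depth $n$. Hence the infinite continued fraction makes sense as a formal power series and equals $F_{0}(z)$ term by term.

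The main obstacle here is not the algebra but the bookkeeping around the first-return decomposition: I would make sure that the labelling rule (a $U$ step is weighted by $a_{n}$ if it ascends from height $n$) is compatible with the vertical shift that identifies sub-paths living in $\{y \ge k+1\}$ with $F_{k+1}(z)$, so that the indices $a_{k}$ appearing in the continued fraction correctly track the heights of the up steps. Once this is clean, the recursion $F_{k}(z) = 1/(1 - a_{k} z F_{k+1}(z))$ immediately unfolds into the claimed continued fraction.
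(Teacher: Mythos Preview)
Your argument is correct and is essentially Flajolet's original first-return decomposition: the recursion $F_{k}(z) = 1/(1 - a_{k} z\, F_{k+1}(z))$ is exactly what one obtains by factoring a nonempty path at its first return to the starting height, and your remark about formal convergence (the coefficient of $z^{n}$ being fixed by the depth-$n$ truncation) is the standard way to make sense of the infinite continued fraction in $\mathbb{Z}[a_{0},a_{1},\dots]\llbracket z\rrbracket$. One small comment on your closing paragraph: since you defined $F_{k+1}(z)$ directly as a sum over paths living in $\{y\ge k+1\}$ with the \emph{same} height-dependent weight rule, no vertical shift is actually needed to identify the middle sub-path with $F_{k+1}(z)$; horizontal translation invariance of the weights is all that is used.

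As for the comparison with the paper: the paper does not supply its own proof of this lemma. It is stated as a known result and attributed to Flajolet, serving as an input to the combinatorial derivation that follows. Your proof is the classical one and is entirely appropriate here.
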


\begin{subequations} \label{eq:mcnohgj9}
  The subproblem (\ref{itm:idadsvsd}) asks us to solve the system of equations
  \begin{equation}
    a_{2n}   = \frac{\Delta'_{n+1} \Delta_{n}}{\Delta'_{n} \Delta_{n+1}}, \qquad
    a_{2n+1} = \frac{\Delta'_{n} \Delta_{n+2}}{\Delta'_{n+1} \Delta_{n+1}}, \qquad
    n \in \mathbb{N}_{0},
  \end{equation}
  for $f^{(0)}_{n}$, $n \in \mathbb{N}_{0}$, where $\Delta_n$ and $\Delta'_n$ denote the determinants of size $n$
  \begin{equation}
    \Delta_n = \det(f^{(0)}_{j+k})_{j,k=0}^{n-1}, \qquad
    \Delta'_n = \det(f^{(0)}_{j+k+1})_{j,k=0}^{n-1}.
  \end{equation}
\end{subequations}
In the theory of Pad\'e approximants (see, e.g., \cite{Baker-GravesMorris(1996)}),
it is well-known that the system \eref{eq:mcnohgj9} is equivalent
to the equation between a formal power series and an S-fraction
\begin{equation}
  \sum_{n=0}^{\infty} f^{(0)}_{n} z^{n} = {}
  \frac{1}{\displaystyle 1 - {}
    \frac{a_0 z}{\displaystyle 1 - {}
      \frac{a_1 z}{\displaystyle 1 - {}
        \frac{a_2 z}{\displaystyle 1 - \cdots}}}}
\end{equation}
with the normalization that $f^{(0)}_{0} = 1$.

This observation on Pad\'e approximants leads us to the solution to the subproblem (\ref{itm:idadsvsd}):
Owing to lemma \ref{lem:dtw67cuh}, with the normalization that $f^{(0)}_{0} = 1$,
\begin{equation} \label{eq:jtdsugfud}
  f^{(0)}_{n} = \sum_{P} w(P)
\end{equation}
where the sum in the right-hand side is taken over all the positive grounded paths $P$
whose two ends are fixed at $(0,0)$ and $(2n,0)$.
For example, the first few of $f^{(0)}_{n}$ are
\begin{subequations}
  \begin{eqnarray}
    f^{(0)}_{0} = 1, \\
    f^{(0)}_{1} = a_0, \\
    f^{(0)}_{2} = a_0^2 + a_0 a_1, \\
    f^{(0)}_{3} = a_0^3 + 2 a_0^2 a_1 + a_0 a_1^2 + a_0 a_1 a_2.
  \end{eqnarray}
\end{subequations}
As noted in \cite{Flajolet(1980)}, we can write the combinatorial formula \eref{eq:jtdsugfud} in the form
\begin{equation}
  f^{(0)}_{n} = {}
  \sum_{k_1=0} \sum_{k_2=0}^{k_1+1} \sum_{k_3=0}^{k_2+1} \cdots \sum_{k_n=0}^{k_{n-1}+1}
  a_{k_1} a_{k_2} a_{k_3} \cdots a_{k_n}.
\end{equation}
The initial value $f^{(0)}_{n}$ of $f^{(t)}_{n}$ is thus found as a polynomial in $a_k$ homogeneous of degree $n$.
The number of monomials in $f^{(0)}_{n}$,
which is equal to the number of positive grounded paths from $(0,0)$ to $(2n,0)$,
is counted by the Catalan number $C_n = \frac{1}{n+1} {2n \choose n}$.
For details on the Catalan numbers,
refer the On-Line Encyclopedia of Integer Sequences \cite{OEIS} for Sequence A000108.

Due to (\ref{eq:hcuwebvd}) and (\ref{eq:jtdsugfud}),
the $(j,k)$-entry of the determinant (\ref{eq:apo0g4wg}), $f^{(t)}_{j+k} = f^{(0)}_{t+j+k}$, is shown
to be equal to the sum of the weight $w(P)$ of all the positive grounded paths $P$
going from $(0,0)$ to $(2t+2j+2k,0)$, or equivalently, going from $(-2j,0)$ to $(2t+2k,0)$.
We can thereby successfully apply Gessel--Viennot's lemma on determinants
\cite{Gessel-Viennot(1985),Aigner(2001LNCS)}
to solve the subproblem (\ref{itm:abksuhv;}):
For $t,n \in \mathbb{N}_{0}$, let $P(t,n)$ denote the collection of $n$-sets $\bm{P} = {\{ P_0,\ldots,P_{n-1} \}}$
of positive grounded paths $P_j$ satisfying the following conditions:
\begin{enumerate}[(a)]
\item
  The two ends of $P_j$ are fixed at $(-2j,0)$ and $(2t+2j,0)$.
\item
  The $n$ paths $P_0,\ldots,P_{n-1}$ are {\em non-intersecting},
  Namely, every two distinct paths $P_j$ and $P_k$, $j \neq k$, never intersect at any points.
\end{enumerate}
Then, with the normalization that $f^{(0)}_{0} = 1$, Gessel--Viennot's lemma yields that
\begin{equation} \label{eq:6tcdscds}
  \tau^{(t)}_{n} = \sum_{\bm{P} \in P(t,n)} w(\bm{P}) \qquad
  \mbox{where $w(\bm{P}) = w(P_0) \cdots w(P_{n-1})$.}
\end{equation}
As shown in \fref{fig:iaufg8vyu},
we can draw each $n$-set $\bm{P} = {\{ P_0,\ldots,P_{n-1} \}} \in P(t,n)$ as a diagram
of $n$ positive grounded paths which are non-intersecting.
\begin{figure}
  \centering
  \includegraphics{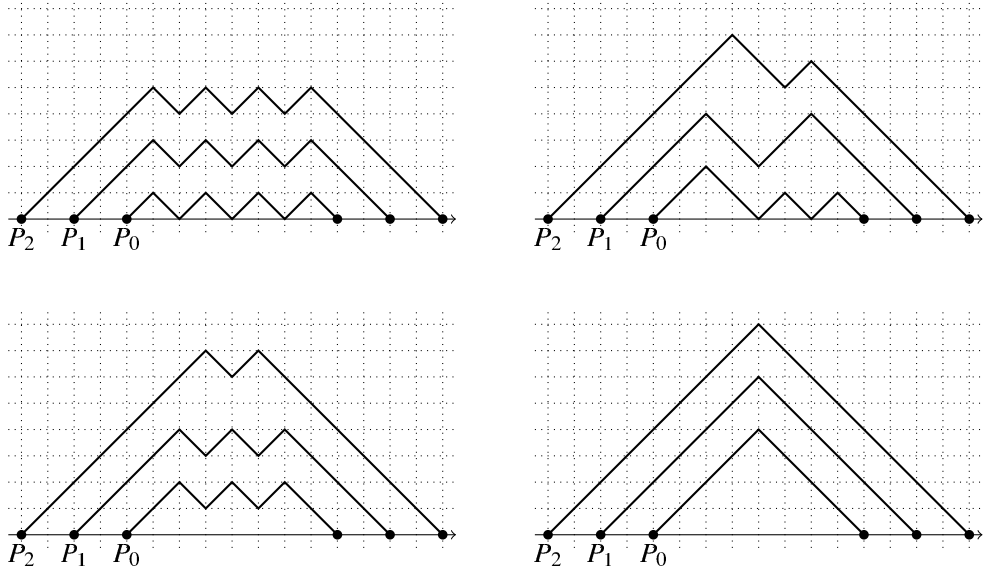}
  \caption{
    The $n$-sets $\bm{P} = {\{ P_0,\ldots,P_{n-1} \}} \in P(t,n)$ of positive grounded paths,
    where $t=4$, $n=3$.
    Each $n$-set $\bm{P}$ can be drawn in a diagram of $n$ non-intersecting positive grounded paths $P_j$
    such that $P_j$ goes from $(-2j,0)$ to $(2t+2j,0)$.
  }
  \label{fig:iaufg8vyu}
\end{figure}

The value of $\tau^{(t)}_{n}$ is found as a polynomial in $a_k$ homogeneous of degree $n(2t+n-1)/2$.
The number of monomials in $\tau^{(t)}_{n}$, which is equal to the cardinality $\#{P(t,n)}$, is exactly evaluated
in \cite{Viennot(2000)},
\begin{equation} \label{eq:fweo8ger}
  \#{P(t,n)} = \prod_{1 \le j \le k < t} \frac{2n+j+k}{j+k}.  
\end{equation}

We have solved the subproblems (\ref{itm:idadsvsd}) and (\ref{itm:abksuhv;}).
The solution to the initial value problem of the discrete Toda molecule is given as follows:

\begin{thm} \label{thm:dTodaTau}
  The solution to the initial value problem of the discrete Toda molecule \eref{eq:dToda} is given
  by (\ref{eq:hbgydjsuck}) with the tau function
  \begin{equation} \label{eq:iu8o7yew}
    \tau^{(t)}_{n} = \sum_{\bm{P} \in P(t,n)} w(\bm{P}), \qquad t,n \in \mathbb{N}_{0}.
  \end{equation}
\end{thm}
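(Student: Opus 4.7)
The plan is to assemble three ingredients that have been prepared in the preceding discussion: the Hankel determinant solution \eref{eq:hcudisgdv} to the bilinear equation \eref{eq:suhifefvs}, Flajolet's interpretation of continued fractions (Lemma \ref{lem:dtw67cuh}), and Gessel--Viennot's lemma on non-intersecting paths. Via the tau-function substitution \eref{eq:hbgydjsuck}, the problem reduces to the two subproblems (\ref{itm:idadsvsd}) and (\ref{itm:abksuhv;}): determine $f^{(0)}_{n}$ from the initial data $a_n$, then evaluate the Hankel determinant for all $t$.

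First I would solve subproblem (\ref{itm:idadsvsd}). Specialising \eref{eq:hbgydjsuck} at $t=0$ together with \eref{eq:jbhuy7y8c} yields the system \eref{eq:mcnohgj9} relating the $a_n$ to the Hankel minors $\Delta_n, \Delta'_n$ of $f^{(0)}_{n}$. By the classical Pad\'e-theoretic equivalence between such Hankel-minor systems and S-fractions, this is equivalent, under the normalisation $f^{(0)}_{0} = 1$, to the continued-fraction identity whose right-hand side is exactly the generating function of Lemma \ref{lem:dtw67cuh}. Extracting the coefficient of $z^n$ --- which matches the coefficient of $z^{D(P)}$ with $D(P)=n$, since grounded paths have equal numbers of up and down steps --- produces \eref{eq:jtdsugfud}.

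Next I would solve subproblem (\ref{itm:abksuhv;}). The dispersion relation \eref{eq:dsjuvdsvds} has the trivial solution $f^{(t)}_{n} = f^{(0)}_{t+n}$, so the $(j,k)$-entry of \eref{eq:apo0g4wg} is the weighted sum of positive grounded paths from $(0,0)$ to $(2(t+j+k),0)$, or equivalently, by translation, from $(-2j,0)$ to $(2t+2k,0)$. Applying the Gessel--Viennot lemma expands $\tau^{(t)}_{n}$ as a signed sum over $n$-tuples of such paths whose endpoints are matched by some permutation $\sigma$, with intersecting configurations cancelling in pairs by the standard lowest-intersection involution. The key geometric observation is that, for any non-identity $\sigma$, the positivity constraint combined with the ordering of the endpoints $(-2j,0)$ and $(2t+2j,0)$ on the $x$-axis forces at least two participating paths to reverse their relative heights at opposite ends, producing a mandatory crossing by a discrete intermediate-value argument on the height difference. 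Hence only $\sigma = \mathrm{id}$ admits non-intersecting realisations, each contributing with sign $+1$, yielding exactly \eref{eq:iu8o7yew}.

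The principal obstacle is the Gessel--Viennot step, specifically verifying that no non-identity permutation admits a non-intersecting realisation under the positivity constraint --- a condition that would be immediate if sources and targets lay on disjoint parallel lines, but that here requires the height-reversal argument sketched above because all endpoints lie on the single line $y=0$. The boundary condition $\tau^{(t)}_{0} = 1$ and the case $n=0$ are automatic from the standard empty-product conventions. Once these two points are settled, the claimed formula \eref{eq:iu8o7yew} follows by composing the tau function with the transformation \eref{eq:hbgydjsuck}.
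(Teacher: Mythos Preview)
Your proposal is correct and follows essentially the same route as the paper: reduce via the Hankel-determinant solution to subproblems (\ref{itm:idadsvsd}) and (\ref{itm:abksuhv;}), solve (\ref{itm:idadsvsd}) through the Pad\'e/S-fraction equivalence combined with Flajolet's lemma to obtain \eref{eq:jtdsugfud}, and solve (\ref{itm:abksuhv;}) by Gessel--Viennot after the translation of endpoints to $(-2j,0)$ and $(2t+2k,0)$. The paper simply invokes Gessel--Viennot's lemma at this last step without spelling out why only the identity permutation admits non-intersecting configurations; your height-difference (discrete intermediate-value) argument correctly supplies that verification.
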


The expression \eref{eq:iu8o7yew} of the tau function $\tau^{(t)}_{n}$ is {\em subtraction-free},
namely, contains no subtractions.
That is why $\tau^{(t)}_{n}$ is positive for every $t,n \in \mathbb{N}_{0}$
if and only if the initial value $a_n$ is positive for every $n \in \mathbb{N}_{0}$.
The subtraction-free expression \eref{eq:iu8o7yew} of $\tau^{(t)}_{n}$ is {\em ultradiscretizable}
to obtain an exact solution to the ultradiscrete Toda molecule.

\section{Initial value problem of the ultradiscrete Toda molecule}
\label{sec:hdguiwefh7}

As an initial value problem of the ultradiscrete Toda molecule \eref{eq:uToda},
we consider the totally analogous problem to the discrete Toda molecule solved in \sref{sec:e6ucbvdfu}:

{\em
  For the ultradiscrete Toda molecule (\ref{eq:uToda}), let us write the initial value at $t=0$
  \begin{equation} 
    Q^{(0)}_{n} = A_{2n}, \qquad
    E^{(0)}_{n+1} = A_{2n+1}, \qquad
    n \in \mathbb{N}_{0}.
  \end{equation}
  Then, for every $t \in \mathbb{N}_{0}$,
  find the exact value of $Q^{(t)}_{n}$, $E^{(t)}_{n+1}$, $n \in \mathbb{N}_{0}$,
  uniquely determined from (\ref{eq:uToda}) in terms of the initial value $A_n$.}

The solution to this initial value problem can be obtained
by ultradiscretizing the corresponding solution to the discrete Toda molecule
stated in theorem \ref{thm:dTodaTau}.
Ultradiscretizing theorem \ref{thm:dTodaTau}, we obtain the following statement:

Let $P$ be a positive grounded path.
We label each step in $P$ by $A_n$ if the step is an up step ascending from the line $y=n$
and by {\em zero} if a down step.
We define the weight $W(P)$ by the {\em sum} of the labels of all the steps in $P$.
The solution to the initial value problem of the ultradiscrete Toda molecule is then given by (\ref{eq:oje8wxas})
with the tau function
\begin{equation} \label{eq:nsiygivf}
  T^{(t)}_{n} = \min_{\bm{P} \in P(t,n)} W(\bm{P}), \qquad W(\bm{P}) = W(P_0) + \cdots + W(P_{n-1}).
\end{equation}
The support $P(t,n)$ of the minimum is much the same as the sum in \eref{eq:iu8o7yew}.

The rest of \sref{sec:hdguiwefh7} is devoted
to simplifying the expression \eref{eq:nsiygivf} of the tau function $T^{(t)}_{n}$ by combinatorial observation.

\subsection{Solution in tabular paths}
\label{sec:kyifchid}

Let $P$ be a positive grounded path.
We refer to two consecutive up-down steps and down-up steps, $UD$ and $DU$, in $P$
by a {\em peak} and a {\em valley}, respectively.
We say that $P$ is {\em tabular} provided that, for some $k \in \mathbb{N}_{0}$,
all the peaks and the valleys in $P$ reside
in the strip of height one bordered by the two horizontal lines $y=k$ and $y=k+1$.

For $t,n \in \mathbb{N}_{0}$, we define a subset $\bar{P}(t,n) \subseteq P(t,n)$
as the collection of $n$-sets $\bar{\bm{P}} = {\{ \bar{P}_0,\ldots,\bar{P}_{n-1} \}} \in P(t,n)$
in which every $\bar{P}_j$ is tabular.
For example, all the $n$-sets $\bm{P} \in P(t,n)$ in \fref{fig:iaufg8vyu}, except the upper right one,
belong to $\bar{P}(t,n)$ since each path in $\bm{P}$ is tabular.

\begin{lem} \label{lem:7gidlvbl}
  There exists $\bar{\bm{P}} \in \bar{P}(t,n)$ which takes the minimal weight:
  $W(\bar{\bm{P}}) = \min_{\bm{P} \in P(t,n)} W(\bm{P})$.
\end{lem}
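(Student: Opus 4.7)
My plan is to prove the lemma by an exchange argument: since $P(t,n)$ is finite, a minimizer $\bm{P}^{\ast} \in P(t,n)$ of $W$ exists, and I would transform $\bm{P}^{\ast}$ into a tabular family $\bar{\bm{P}} \in \bar{P}(t,n)$ without increasing $W$. Combined with the trivial inclusion $\bar{P}(t,n) \subseteq P(t,n)$, this yields both the desired inequality and its converse.

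The key building block is the single-path case $n=1$. For a positive grounded path $P$ of length $2L$, write $c_{i}(P)$ for the number of up steps of $P$ starting at height $i$, so that $W(P) = \sum_{i} c_{i}(P) A_{i}$ is linear in the profile $c(P)$. Restricted to paths of maximal height $h+1$, the realizable profiles $(c_{0}, \ldots, c_{h})$ are exactly the lattice points of the simplex $\{c_{i} \ge 1, \sum_{i=0}^{h} c_{i} = L\}$, whose vertices are those profiles with one coordinate equal to $L-h$ and the remaining ones equal to $1$. A linear functional is minimized on a simplex at a vertex, and among these vertices the tabular one (large coordinate at the top position $h$) is always weakly optimal: a short calculation shows the weight difference between the vertex with large coordinate at $i < h$ and the tabular profile at $k=i$ is $\sum_{i^{\prime}=i+1}^{h}(A_{i^{\prime}}-A_{i})$, which is non-negative when $i$ is the argmin of $A$ on $\{0,\ldots,h\}$. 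So some tabular path attains the single-path minimum.

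For general $n$ I would tabularize $\bm{P}^{\ast}$ by a top-down sweep $j = n-1, n-2, \ldots, 0$, replacing $P^{\ast}_{j}$ by a tabular $\bar{P}_{j}$ of plateau height $k_{j}$ chosen to minimize $W(\bar{P}_{j})$ over the interval of admissible heights dictated by $\bar{P}_{j+1}$ above and $P^{\ast}_{j-1}$ below. To propagate the sweep without loss, I would select $\bar{P}_{j+1}$ not as the unconstrained single-path optimum but as the minimum subject to $k_{j+1} \ge \max_{x} P^{\ast}_{j+1}(x) - 1$; the single-path vertex argument still produces such a $\bar{P}_{j+1}$ with weight $\le W(P^{\ast}_{j+1})$. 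This extra constraint ensures that the upper bound on $k_{j}$ at the next stage of the sweep is no tighter than the one $P^{\ast}_{j+1}$ already imposed, so the admissible range contains the tabular profile of plateau height $\max_{x} P^{\ast}_{j}(x) - 1$, which by the single-path analysis on the restricted simplex has weight $\le W(P^{\ast}_{j})$. Summing over $j$ gives $W(\bar{\bm{P}}) \le W(\bm{P}^{\ast})$.

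The main obstacle will be the careful bookkeeping of parity and non-intersection throughout the sweep: because any two paths in $P(t,n)$ share the height parity of $x$ at each shared position, non-intersection forces their heights to differ by at least $2$, a requirement that interacts non-trivially with the set of admissible plateau heights $k_{j}$. I expect this to reduce to verifying that the piecewise-linear function $k \mapsto W(\bar{P}_{j}(k))$, which has a single interior minimum, attains on the admissible interval a value no larger than $W(P^{\ast}_{j})$; the endpoint analysis should follow from the constraint $k_{j+1} \ge \max P^{\ast}_{j+1} - 1$ maintained inductively along the sweep.
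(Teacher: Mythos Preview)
Your single-path analysis is sound, but the extension to $n>1$ via the top-down sweep has a genuine gap. The claim that ``the single-path vertex argument still produces such a $\bar P_{j+1}$ with weight $\le W(P^*_{j+1})$'' under the constraint $k_{j+1}\ge\max_x P^*_{j+1}(x)-1$ is not justified: your own simplex argument locates the optimal tabular replacement at plateau $i^*=\arg\min_{i\le h}A_i$, which may well satisfy $i^*<\max P^*_{j+1}-1$. The tabular path at the forced higher plateau $h=\max P^*_{j+1}-1$ has weight $\sum_{i<h}A_i+(L-h)A_h$, and a direct comparison with $W(P^*_{j+1})$ can go either way (take $A_h$ large and some lower $A_i$ small). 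Global minimality of $\bm P^*$ gives the inequality in the \emph{wrong} direction here, since the high-plateau tabular is itself an admissible replacement. Your fallback---that $k\mapsto W(\bar P_j(k))$ is unimodal---is also false: from $W(k+1)-W(k)=(L-k-1)(A_{k+1}-A_k)$ one sees the sign tracks that of $A_{k+1}-A_k$, so any non-monotone initial data (e.g.\ $A=(0,10,0,10,\dots)$) produces multiple local minima, and the endpoint analysis collapses.

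The paper sidesteps this difficulty entirely. Rather than jumping to a tabular profile in one move, it defines two local deformations $\varphi,\psi$ that rewrite each up hook $DU^kD$ as $U^{k-1}DUD$ and $DUDU^{k-1}$ respectively, and observes the mean identity $W(\varphi(P))+W(\psi(P))=2W(P)$. Applied to a minimizer $\bm P^*$, this forces $W(\varphi(\bm P^*))=W(\psi(\bm P^*))=W(\bm P^*)$, so iterating $\varphi$ stays on the set of minimizers while strictly pushing up hooks leftward until none remain; a symmetric pass handles down hooks. Crucially, $\varphi$ and $\psi$ act identically on each path and preserve non-intersection, so no per-path weight comparison against a constrained tabular target is ever needed. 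If you want to salvage the exchange idea, you would need a replacement for the unimodality step that uses minimality of the whole family, and it is exactly this that the mean-formula argument supplies.
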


\begin{proof}
  We will prove the lemma by examining particular subpaths which we call {\em hooks}.
  Let $P$ be a positive grounded path.
  We call a subpath $H$ of $P$ an {\em up hook} (resp.~a {\em down hook})
  provided that $H$ is of at least four steps
  and that both the first and the last steps of $H$ are down steps (resp.~up steps)
  and all the middle steps are up steps (resp.~down steps).
  That is, each up hook (resp.~down hook) is of the form $D U^k D$ (resp.~$U D^k U$) for some integer $k \ge 2$.
  For example, see \fref{fig:hooks}.
  \begin{figure}
    \centering
    \includegraphics{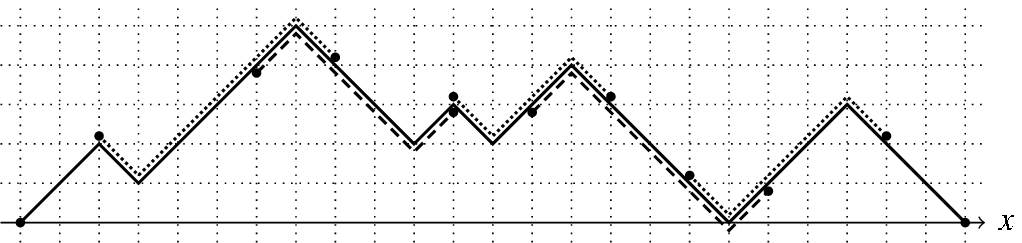}
    \caption{
      The hooks $H$ in a positive grounded path $P$.
      The path $P$ contains three up hooks, highlighted in dotted lines,
      and two down hooks, in dashed lines.
    }
    \label{fig:hooks}
  \end{figure}
  Obviously, a positive grounded path $P$ is tabular if and only if $P$ contains no hooks.

  Let us define two maps $\varphi$ and $\psi$ which deform a positive grounded path $P$ as follows:
  $\varphi(P)$ (resp.~$\psi(P)$) denotes the positive grounded path obtained from $P$
  by replacing each up hook in $P$, say $D U^k D$, with $U^{k-1}DUD$ (resp. with $DUDU^{k-1}$),
  where $\varphi(P) = \psi(P) = P$ if $P$ contains no up hooks.
  See \fref{fig:78uijvkn} which shows the deformed paths $\varphi(P)$ and $\psi(P)$
  obtained from the positive grounded path $P$ in \fref{fig:hooks}.
  \begin{figure}
    \centering
    \includegraphics{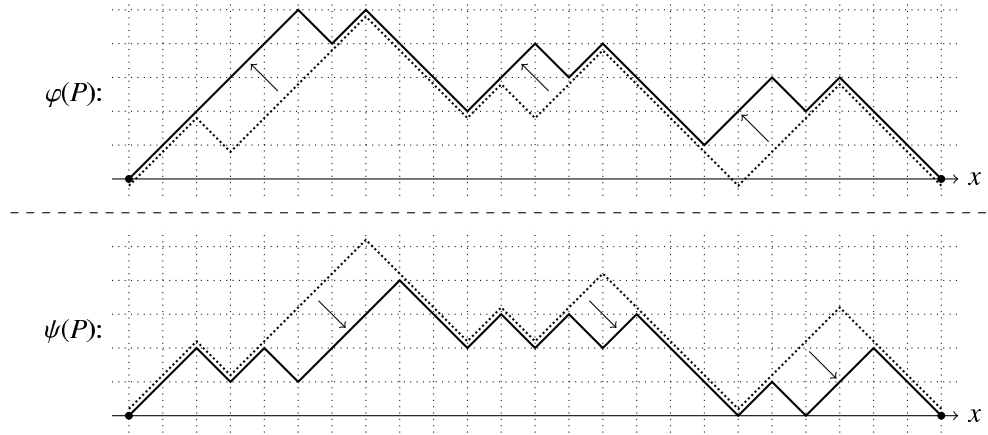}
    \caption{
      The deformation by the maps $\varphi$ and $\psi$.
      The paths $\varphi(P)$ and $\psi(P)$ (depicted in solid lines) are obtained
      from the path $P$ (in dotted lines) in \fref{fig:hooks}
      by deforming each up hook in $P$ as indicated by an arrow.
    }
    \label{fig:78uijvkn}
  \end{figure}
  
  We can observe the following on the maps $\varphi$ and $\psi$:
  Let $P$ and $P'$ be positive grounded paths.
  \begin{enumerate}[(a)]
  \item \label{itm:uya76dtuf}
    The $\varphi$ and $\psi$ never increase the number of up hooks in $P$
    as well as the number of down hooks in $P$.
  \item \label{itm:de8hckuvf}
    If $P$ contains no up hooks to the right side of the line $x=j$ then $\varphi(P)$ so to $x=j-2$.
    Similarly, if $P$ contains no up hooks to the left side of $x=j$ then $\psi(P)$ so to $x=j+2$.
  \item \label{itm:abukwywe}
    The $\varphi$ and $\psi$ never move the location of the two ends of $P$.
  \item \label{itm:sieufscds}
    If $P$ and $P'$ are non-intersecting then $\varphi(P)$ and $\varphi(P')$ are so.
    That is also the case with $\psi(P)$ and $\psi(P')$.
  \item \label{itm:hy7idusc}
    The following `mean' formula holds:
    \begin{equation} \label{eq:ukdhcis}
      W(\varphi(P)) + W(\psi(P)) = 2 W(P).
    \end{equation}
  \end{enumerate}

  Now, let us prove lemma \ref{lem:7gidlvbl}.
  Let $\bm{P} = {\{ P_0,\ldots,P_{n-1} \}} \in P(t,n)$.
  We can assume that $\bm{P}$ takes the minimal weight, $W(\bm{P}) = \min_{\bm{P}' \in P(t,n)} W(\bm{P}')$,
  without any loss of generality.
  For each $k \in \mathbb{N}_{0}$, let $\varphi^{k}(\bm{P}) = {\{ \varphi^k(P_0),\ldots,\varphi^k(P_{n-1}) \}}$
  denote the $n$-sets of positive grounded paths obtained from $\bm{P}$
  by applying the map $\varphi$ iteratively $k$ times to each path $P_j \in \bm{P}$.
  By induction with respect to $k \in \mathbb{N}_{0}$, we can show the following:
  \begin{enumerate}[(i)]
  \item
    Due to the observation (\ref{itm:de8hckuvf}),
    every path in $\varphi^k(\bm{P})$ contains no up hooks to the right side of the line $x=2t-2k-1$.
    That is because $\bm{P}$ contains no up hooks to the right side of $x=2t-1$.
  \item
    Due to the observations (\ref{itm:abukwywe}) and (\ref{itm:sieufscds}), $\varphi^k(\bm{P}) \in P(t,n)$.
  \item
    Due to the observation (\ref{itm:hy7idusc}), $W(\varphi^k(\bm{P})) = W(\bm{P})$.
    Indeed, if $W(\varphi^k(\bm{P})) > W(\varphi^{k-1}(\bm{P}))$, the formula \eref{eq:ukdhcis} would lead
    $W(\psi \circ \varphi^{k-1}(\bm{P})) < W(\varphi^{k-1}(\bm{P}))$,
    that contradicts the minimality of $W(\bm{P})$.
  \end{enumerate}
  As a consequence of (i), (ii), (iii), we can deduce that $\varphi^{t-1}(\bm{P}) \in P(t,n)$ contains no up hooks
  and has the minimal weight $W(\varphi^{t-1}(\bm{P})) = W(\bm{P})$.

  In a similar way, we can show the existence of $\check{\bm{P}} \in P(t,n)$ containing no down hooks
  and having the minimal weight $W(\check{\bm{P}}) = W(\bm{P})$.
  From the discussion in the last paragraph, the $\varphi^{t-1}(\check{\bm{P}})$ contains no up hooks
  and has the minimal weight $W(\varphi^{t-1}(\check{\bm{P}})) = W(\bm{P})$.
  Further, due to the observation (\ref{itm:uya76dtuf}), the $\varphi^{t-1}(\check{\bm{P}})$ contains no down hooks.
  Therefore, the $\varphi^{t-1}(\check{\bm{P}})$ having the minimal weight belongs to the set $\bar{P}(t,n)$.
  That completes the proof.
\end{proof}

As a consequence of lemma \ref{lem:7gidlvbl},
we can also solve the initial value problem of the ultradiscrete Toda molecule in the following way:

\begin{thm} \label{thm:f7iewchds}
  The solution to the initial value problem of the ultradiscrete Toda molecule \eref{eq:uToda} is given
  by \eref{eq:oje8wxas} with the tau function
  \begin{equation} \label{eq:qjf0ew9vn}
    T^{(t)}_{n} = \min_{\bar{\bm{P}} \in \bar{P}(t,n)} W(\bar{\bm{P}}), \qquad t,n \in \mathbb{N}_{0}.
  \end{equation}
\end{thm}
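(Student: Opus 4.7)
The plan is to derive Theorem \ref{thm:f7iewchds} as an immediate corollary of Lemma \ref{lem:7gidlvbl} combined with the ultradiscretized form of Theorem \ref{thm:dTodaTau}. The starting point is the formula \eref{eq:nsiygivf}, namely $T^{(t)}_n = \min_{\bm{P} \in P(t,n)} W(\bm{P})$, which the text has already produced by ultradiscretizing the subtraction-free expression \eref{eq:iu8o7yew}; this is legitimate precisely because \eref{eq:iu8o7yew} contains no subtractions, so setting $q^{(t)}_n = \exp(-Q^{(t)}_n/\varepsilon)$, $e^{(t)}_n = \exp(-E^{(t)}_n/\varepsilon)$, $\tau^{(t)}_n = \exp(-T^{(t)}_n/\varepsilon)$ and sending $\varepsilon \to 0$ converts the discrete solution directly into its ultradiscrete counterpart. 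Since this formula together with the variable transformation \eref{eq:oje8wxas} already solves the initial value problem, the only remaining task for Theorem \ref{thm:f7iewchds} is to argue that the range of the minimum may be restricted from $P(t,n)$ to $\bar{P}(t,n)$.

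For that restriction, I would combine two simple observations. The inclusion $\bar{P}(t,n) \subseteq P(t,n)$ trivially gives
\begin{equation*}
  \min_{\bar{\bm{P}} \in \bar{P}(t,n)} W(\bar{\bm{P}}) \ge \min_{\bm{P} \in P(t,n)} W(\bm{P}),
\end{equation*}
and the reverse inequality is exactly the content of Lemma \ref{lem:7gidlvbl}, which guarantees that some tabular configuration $\bar{\bm{P}} \in \bar{P}(t,n)$ attains the minimum over the larger set $P(t,n)$. Chaining the two inequalities produces
\begin{equation*}
  T^{(t)}_n \;=\; \min_{\bm{P} \in P(t,n)} W(\bm{P}) \;=\; \min_{\bar{\bm{P}} \in \bar{P}(t,n)} W(\bar{\bm{P}}),
\end{equation*}
which is the formula \eref{eq:qjf0ew9vn}. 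Together with \eref{eq:oje8wxas} this yields $Q^{(t)}_n$ and $E^{(t)}_{n+1}$ in closed form, completing the theorem.

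There is essentially no further obstacle: the hook-flipping construction in the proof of Lemma \ref{lem:7gidlvbl}, using the maps $\varphi$ and $\psi$ and the mean formula $W(\varphi(P)) + W(\psi(P)) = 2W(P)$, has already absorbed all the combinatorial difficulty. If anything, the one subtle point worth an explicit sentence in the write-up is that the ultradiscretization step is unconditional here only because \eref{eq:iu8o7yew} is manifestly subtraction-free; this is also why the same argument would fail if one tried to ultradiscretize the Hankel determinant \eref{eq:apo0g4wg} directly, and it is what motivates the tabular-path simplification in the first place.
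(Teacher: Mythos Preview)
Your proposal is correct and matches the paper's approach exactly: the paper presents Theorem~\ref{thm:f7iewchds} as an immediate consequence of Lemma~\ref{lem:7gidlvbl} applied to the already-derived formula~\eref{eq:nsiygivf}, with no additional argument beyond what you outline. Your two-inequality chaining (the trivial direction from $\bar{P}(t,n)\subseteq P(t,n)$ together with the existence statement of Lemma~\ref{lem:7gidlvbl}) is precisely the implicit reasoning the paper intends.
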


The expression \eref{eq:qjf0ew9vn} of the tau function $T^{(t)}_{n}$ is supported by the set $\bar{P}(t,n)$
much smaller than $P(t,n)$ in \eref{eq:nsiygivf}.
In fact, the cardinarity of $\bar{P}(t,n)$ is equal to the binomial number
\begin{equation}
  \#{\bar{P}(t,n)} = {t+n-1 \choose t} = \prod_{1 \le j < t} \frac{n+j}{j}
\end{equation}
which is much smaller than $\#{P(t,n)}$ given by \eref{eq:fweo8ger}.
In this sense, the expression \eref{eq:qjf0ew9vn} gives a simpler expression of the tau function $T^{(t)}_{n}$
than \eref{eq:nsiygivf}.

\subsection{Solution in shortest paths on a graph}
\label{sec:jenbrv98}

In \sref{sec:jenbrv98}, based on theorem \ref{thm:f7iewchds},
we derive another combinatorial expression of the solution in terms of {\em shortest paths} on a graph.

Let $G$ denote the (directed acyclic) graph in $\mathbb{N}_{0}^2$
consisting of the vertices at the points $(j,k) \in \mathbb{N}_{0}^2$, $j \ge k$,
connected by the two types of (directed) edges:
{\em east edges} $E_{j,k} = (2,0)$ and {\em south edges} $S_{j,k} = (0,-1)$.
Here the subscripts $j,k$ indicate that the initial points of $E_{j,k}$ and $S_{j,k}$ are at the vertex $(j,k)$.
As shown in \fref{fig:gadcsnvdh}, since the east edges $E_{j,k}$ have length two,
the graph $G$ splits into two disjoint subgraphs.
\begin{figure}
  \centering
  \includegraphics{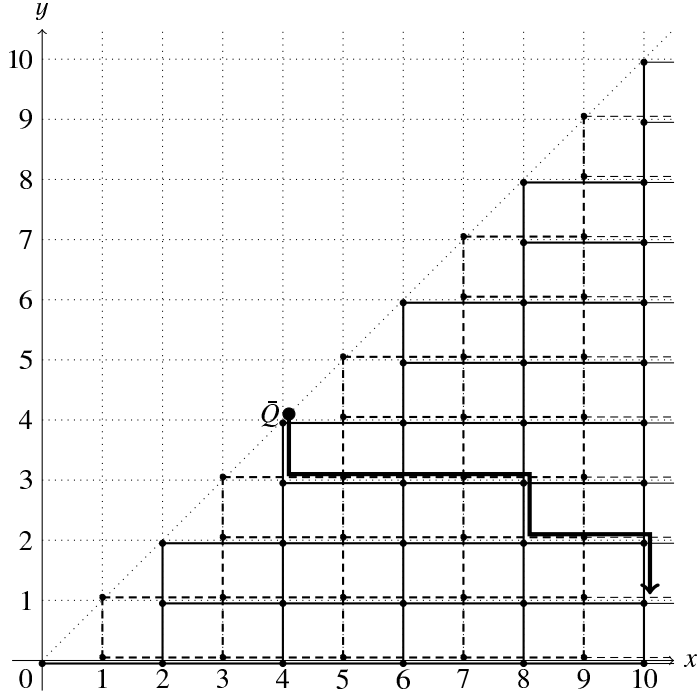}
  \caption{
    The graph $G$ and a path $\bar{Q} \in \bar{Q}(t,n)$, where $t=4$, $n=3$.
    The graph $G$ splits into two disjoint subgraphs,
    one of which is drawn in solid lines and the other in dashed lines.
    The path $\bar{Q}$, drawn in thick lines, goes between $(t,t)$ and $(t+2n,1)$.
  }
  \label{fig:gadcsnvdh}
\end{figure}

We define the weight function $W$ over the edges in $G$ by
\begin{equation} \label{eq:hcdihvy}
  W(E_{j,k}) = \sum_{\ell=0}^{j-k-1} A_{\ell} + k A_{j-k}, \qquad
  W(S_{j,k}) = 0,
\end{equation}
where $A_n$ are the initial value for the ultradiscrete Toda molecule.
We think of the weight $W(e)$ of an edge $e$ as the {\em length} of $e$.
For each path $Q$ on $G$, we then think of the weight $W(Q)$ as the {\em length} of $Q$,
which is equal to the sum of the weight $W(e)$ of all the edges $e$ passed by $Q$.
(The length $W(Q)$ may be negative due to the arbitrariness in $A_n$.)

For $t \in \mathbb{N}_{0}$, $t \ge 1$, and $n \in \mathbb{N}_{0}$,
let $\bar{Q}(t,n)$ denote the collection of paths on $G$ between the two vertices $(t,t)$ and $(t+2n,1)$.
We then have a one-to-one correspondence
between $\bar{\bm{P}} = {\{ \bar{P}_0,\ldots,\bar{P}_{n-1} \}} \in \bar{P}(t,n)$ and $\bar{Q} \in \bar{Q}(t,n)$:
The tabular positive grounded path $\bar{P}_j$ has its peaks and valleys
in the strip bordered by the lines $y=k-2j$ and $y=k-2j+1$
if and only if the path $\bar{Q}$ on $G$ passes through the east edge $E_{t+2j,t-k}$.
For example, in the $n$-sets $\bm{P} \in \bm{P}(t,n)$ of positive grounded paths in \fref{fig:iaufg8vyu},
the lower left one belongs to $\bar{\bm{P}}(t,n)$ and is in one-to-one correspondence
with the path $\bar{Q}$ in \fref{fig:gadcsnvdh}.
Actually, the weight function $W$ on $G$ is defined in \eref{eq:hcdihvy}
so that $W(\bar{\bm{P}}) = W(\bar{Q})$ for every pair of $\bar{\bm{P}}$ and $\bar{Q}$
in one-to-one correspondence.
We thereby have the identity
\begin{equation} \label{eq:uyaivsd}
  \min_{\bar{Q} \in \bar{Q}(t,n)} W(\bar{Q}) = \min_{\bar{\bm{P}} \in \bar{P}(t,n)} W(\bar{\bm{P}}).
\end{equation}

For $t,n \in \mathbb{N}_{0}$, let us define $Q(t,n)$ to be the collection of paths on $G$
between the two vertices $(t,t)$ and $(t+2n,0)$.
We can then show that the identity \eref{eq:uyaivsd} still holds
even if we replace the support set $\bar{Q}(t,n)$ of the left-hand minimum with $Q(t,n)$,
\begin{equation} \label{eq:spj0g8bre}
  \min_{Q \in Q(t,n)} W(Q) = \min_{\bar{\bm{P}} \in \bar{P}(t,n)} W(\bar{\bm{P}}),
\end{equation}
for $W(E_{j,0}) = W(E_{j,1})$ for every $j \in \mathbb{N}_{0}$, $j \ge 1$.
In addition, in that case, the identity \eref{eq:spj0g8bre} also takes place for $t=0$.

Finally, combining theorem \ref{thm:f7iewchds} and the identity \eref{eq:spj0g8bre}, we obtain the following result:

\begin{thm} \label{thm:uTodaSol}
  The solution to the initial value problem of the ultradiscrete Toda molecule \eref{eq:uToda} is given
  by \eref{eq:oje8wxas} with the tau function
  \begin{equation} \label{eq:dahcbfjvs}
    T^{(t)}_{n} = \min_{Q \in Q(t,n)} W(Q), \qquad t,n \in \mathbb{N}_{0},
  \end{equation}
  where the weight function $W$ on the graph $G$ is defined by \eref{eq:hcdihvy}.
\end{thm}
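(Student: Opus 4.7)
The plan is to derive Theorem \ref{thm:uTodaSol} from Theorem \ref{thm:f7iewchds} by exhibiting the graph $G$ as an economical index set for the tabular non-intersecting configurations in $\bar{P}(t,n)$. Concretely, I would establish the two displayed identities \eref{eq:uyaivsd} and \eref{eq:spj0g8bre} that appear in the text leading up to the theorem; combined with Theorem \ref{thm:f7iewchds} they immediately give \eref{eq:dahcbfjvs}.

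For \eref{eq:uyaivsd} (the case $t \ge 1$), I would construct an explicit weight-preserving bijection between $\bar{P}(t,n)$ and $\bar{Q}(t,n)$. A tabular path $\bar{P}_j$ running from $(-2j,0)$ to $(2t+2j,0)$ is uniquely determined by the height $m_j$ of the strip $[m_j, m_j+1]$ that contains all its peaks and valleys together with the number of peaks inside this strip, since once these are fixed the path has to ascend directly from the $x$-axis to level $m_j$, oscillate between $m_j$ and $m_j+1$ the prescribed number of times, and then descend. Using the reparametrisation $m_j = k_j - 2j$ suggested in the text, I would send $\bar{P}_j$ to the east edge $E_{t+2j, t-k_j}$ of $G$. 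The non-intersection of the $\bar{P}_j$ forces the sequence $(m_j)$ to be strictly increasing by at least two in $j$ (because successive paths are shifted horizontally by two and must nest as positive grounded paths with enclosing endpoints), which is precisely the condition under which the edges $E_{t+2j,t-k_j}$ can be linked by south edges into a single path from $(t,t)$ to $(t+2n,1)$. The reverse map reads the strip heights and peak multiplicities off the east edges of $\bar{Q}$. The weight identity $W(\bar{\bm{P}}) = W(\bar{Q})$ then reduces, for each $j$, to matching the ascending up-step contribution $A_0 + \cdots + A_{m_j - 1}$ together with the peak up-step contribution $k_j A_{m_j}$ against the two-term formula $W(E_{t+2j,t-k_j})$ in \eref{eq:hcdihvy}, which is a direct check.

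For \eref{eq:spj0g8bre}, the key observation is that \eref{eq:hcdihvy} yields $W(E_{j,0}) = W(E_{j,1}) = \sum_{\ell=0}^{j-1} A_\ell$ and $W(S_{j,k}) = 0$. Hence any $Q \in Q(t,n)$ either already factors through $(t+2n,1)$, so that $Q$ is a path in $\bar{Q}(t,n)$ followed by the zero-weight south edge $S_{t+2n,1}$, or reaches $(t+2n,0)$ by a final east edge $E_{t+2n-2,0}$, which may be replaced by $E_{t+2n-2,1}$ and then followed by $S_{t+2n,1}$ to produce an element of $\bar{Q}(t,n)$ of the same weight. This bijective adjustment yields \eref{eq:spj0g8bre} for $t \ge 1$, and the boundary case $t=0$ is handled separately: $\bar{P}(0,n)$ collapses to the trivial empty configuration, $Q(0,n)$ consists of a single path of east edges along the diagonal, and one checks directly that its length reproduces $T^{(0)}_n$ computed from \eref{eq:oje8wxas} with the given initial value $Q^{(0)}_n = A_{2n}$, $E^{(0)}_{n+1} = A_{2n+1}$.

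The main obstacle is the first step: verifying that the natural bijection between $\bar{P}(t,n)$ and $\bar{Q}(t,n)$ really respects both the Gessel--Viennot non-intersection condition and the twisted index assignment $m_j = k_j - 2j$, and that the local weight accounting around each peak-strip matches the two-term formula in \eref{eq:hcdihvy} exactly. The passage from $\bar{Q}(t,n)$ to $Q(t,n)$ is then mere bookkeeping once the weight equality $W(E_{j,0}) = W(E_{j,1})$ is in hand.
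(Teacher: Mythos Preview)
Your approach coincides with the paper's own argument in \S\ref{sec:jenbrv98}: establish the weight-preserving bijection $\bar P(t,n)\leftrightarrow\bar Q(t,n)$ to get \eref{eq:uyaivsd}, pass to $Q(t,n)$ via $W(E_{j,0})=W(E_{j,1})$ to get \eref{eq:spj0g8bre}, handle $t=0$ directly, and combine with Theorem~\ref{thm:f7iewchds}. One small bookkeeping caution: with the parametrisation $m_j=k_j-2j$ the number of peaks of $\bar P_j$ is $p_j=t+2j-m_j$ (not $k_j$), so the matching east edge sits at height $p_j$, and it is this value of $p_j$ that multiplies $A_{m_j}$ in both $W(\bar P_j)$ and $W(E_{t+2j,p_j})$.
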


The right-hand side of \eref{eq:dahcbfjvs} denotes
the length of the {\em shortest paths} on the graph $G$ between the two vertices $(t,t)$ and $(t+2n,0)$.
It should be noted that Nakata \cite{Nakata(2011)} constructed a similar combinatorial expression
in terms of shortest paths (called {\em minimum weight flows} in \cite{Nakata(2011)})
of a particular solution to the ultradiscrete Toda molecule on the finite lattice, $n = 0,1,\ldots,N$.

\section{Concluding remarks}
\label{sec:ckdoc89p}

In this paper, we have investigated the discrete and ultradiscrete Toda molecules
from a combinatorial viewpoint.
To the tau function which solves an initial value problem of the discrete Toda molecule,
we have given a combinatorial expression in terms of non-intersecting paths.
Especially, in order to read the tau function in combinatorial words,
we utilized Flajolet's path interpretation of continued fractions
and Gessel--Viennot's lemma on determinants and non-intersecting paths.
Due to the combinatorial expression, we have succeeded to derive a subtraction-free expression of
the tau function which is given in a Hankel determinant.

For an initial value problem of the ultradiscrete Toda molecule,
we first obtained an exact solution by ultradiscretizing the tau function of the discrete Toda molecule.
We next rewrote the tau function to derive a simpler expression in terms of shortest paths.
As a result, we have shown that the tau function
which solves the initial value problem of the ultradiscrete Toda molecule
can be evaluated as the length of shortest paths on a specific graph in which the length of edges is
determined by the initial value.

In this paper, we deduced combinatorial expressions of the tau functions
with the help of a determinant solution to the discrete Toda molecule and the technique of ultradiscretization.
For the tau functions \eref{eq:iu8o7yew} and \eref{eq:dahcbfjvs} given in terms of combinatorial objects,
however, it is expected to make combinatorial (or bijective) proofs to directly verify that
the tau functions satisfy the bilinear equations \eref{eq:suhifefvs} and \eref{eq:ajekuygw}.
For that purpose, the technique of alternating walks \cite{Goulden(1988)}
for Schur symmetric functions would be useful.

The combinatorial idea used in this paper should be applicable to other discrete integrable systems
associated with continued fractions,
such as the $R_{\mathrm{I}}$ and $R_{\mathrm{II}}$ chains with $R_{\mathrm{I}}$- and $R_{\mathrm{II}}$-fractions
\cite{Vinet-Zhedanov(1998),Spiridonov-Zhedanov(1999)},
the FST chain with the Thiele-type continued fractions \cite{Spiridonov-Tsujimoto-Zhedanov(2007)},
and the matrix qd algorithm with the matrix S-fractions \cite{VanIseghem(2004)}.
Those applications will be discussed in future works.

\section*{Acknowledgments}

The authors would like to thank Professor Yoshimasa Nakamura for valuable discussions and comments.
This work was supported by JSPS KAKENHI Grant Number 24740059.

\section*{References}

\end{document}